\newtheorem{mylemma}{Lemma}
\newtheorem{mytheorem}{Theorem}
\begin{document}
%
\title{Orthogonal Time Frequency Space Modulation}

\author{R. Hadani, S. Rakib, S. Kons, M. Tsatsanis,  \\
A. Monk, C. Ibars, J. Delfeld, Y. Hebron, \\
A. J. Goldsmith, A.F. Molisch, and R. Calderbank
 \thanks{R. Hadani, S. Rakib, S. Kons, A. Monk, C. Ibars, J. Delfeld, Y. Hebron are, and M. Tsatsanis was, with Cohere Technologies, Inc., Santa Clara, CA 95051 USA. R. Hadani is also with the University of Texas, Austin, TX 78712. A. Goldsmith is with Stanford University, Stanford, CA 94305. A. F. Molisch is with the University of Southern California, Los Angeles, CA 90089. R. Calderbank is with Duke University, Durham, NC 27708. Parts of this paper were presented at IEEE WCNC 2017 \cite{hadani2017orthogonal}. }
 \\
\thanks{ This work has been submitted to the IEEE for possible publication.  Copyright may be transferred without notice, after which this version may no longer be accessible.}
 }

\maketitle


\begin{abstract}
This paper introduces a new two-dimensional modulation technique called Orthogonal Time Frequency Space (OTFS) modulation. OTFS has the novel and important feature of being designed in the delay-Doppler domain. When coupled with a suitable equalizer, OTFS modulation is able to exploit the full channel diversity over both time and frequency. Moreover, it converts the fading, time-varying wireless channel experienced by modulated signals such as OFDM into a time-independent channel with a complex channel gain that is essentially constant for all symbols. 

This design obviates the need for transmitter adaptation, and greatly simplifies system operation. The paper describes the basic operating principles of OTFS as well as a possible implementation as an overlay to current or anticipated standardized systems. 
OTFS is shown to provide significant performance improvement in systems with high Doppler, short packets, and/or large antenna array. In particular, simulation results indicate at least several dB of block error rate performance improvement for OTFS  over OFDM in all of these settings. 
\end{abstract}



\section{Introduction}\label{sec:intro}

4G cellular communications has achieved enormous success, due in particular to its ability to provide high data rates to a large number of users. 
Those data rate requirements were mainly driven by wireless video demand (around $70 $ \% of all cellular traffic \cite{index2015forecast}), usually consumed when the user is stationary ($>70 $ \% of all connections are with indoor users).  
Over the past years, interest has turned to the development of fifth-generation cellular communications \cite{andrews2014will}. It is anticipated that carrier investment will require new applications (beyond high-speed video connections), including Internet of things (IoT), and high-velocity V2X (vehicle-to-vehicle V2V and vehicle-to-infrastructure V2I) connections. 

Given the emergence of new applications, it is natural to ask whether 5G will benefit from a change in the modulation and multiple access method similar to earlier generation jumps, progressing from analog, to digital TDMA, CDMA, and OFDM. It is well known that OFDM is capacity-achieving in frequency-selective channels. However, {\em this optimality holds only under a set of very specific assumptions}, including (i) knowledge of the channel state information (CSI) at the transmitter, (ii) Gaussian modulation alphabet, (iii) long codewords (which implies the absence of latency constraints), and (iv) unlimited receiver complexity. These assumptions are not fulfilled in many of the 5G applications. It is thus imperative to investigate an {\em ab initio} design of modulation and multiple access for next-generation cellular applications. 


This paper introduces Orthogonal Time Frequency Space (OTFS) modulation, a new modulation scheme whereby each transmitted symbol experiences a near-constant channel gain even in channels with high Doppler, or at high carrier frequencies (mm-wave). Essentially, OTFS performs modulation in the delay-Doppler domain (also known as the {\em Zak domain} \cite{janssen1988zak}), which is naturally suited for transmission over time-varying wireless propagation channels. OTFS thus effectively transforms the time-varying multipath channel into a two-dimensional channel in the delay-Doppler domain. Through this transformation, coupled with equalization in this domain, all symbols over a transmission frame experience the same channel gain. Equivalently, OTFS can be interpreted as modulating each information symbol onto one of a set of two-dimensional (2D) orthogonal basis functions in the time-frequency plane that span the bandwidth and time duration of the transmission burst or packet. These spread basis functions allow, in conjunction with an appropriate equalizer, the extraction of the full channel diversity,\footnote{Full diversity could also be extracted in the time-frequency domain through an appropriately-designed equalizer, however the sparsity and lower variability of the channel in the delay-Doppler domain makes our approach less complex and more robust. } which leads to the almost-constant channel gain mentioned above; in a typical situation we might see a reduction of the standard deviation of power variations from 4 dB to 0.1 dB.  

The Zak representation of signals can be interpreted as a generalization of the time representation of signals on one hand, or the frequency representation of signals on the other hand. Thus, OTFS can be viewed as a generalization of OFDM or TDMA; for example it reduces to OFDM if the two-dimensional basis functions are subchannel carriers, or it can be seen as a generalization of single-carrier transmission, where the localization of the basis pulses is not only in the delay domain, but also the Doppler domain. Furthermore, since OTFS uses basis functions extending over the entire bandwidth and duration, OTFS is a generalization of (two-dimensional) CDMA; specifically a generalization of DFT-spread OFDM such that the spreading is not only one-dimensional along the frequency axis, but rather two-dimensional in the time-frequency plane. However, in contrast to CDMA and OFDM, the set of OTFS basis functions is specifically derived to combat the dynamics of the time-varying multipath channel. In summary, OTFS is designed to inherit advantageous properties from each of OFDM, TDMA, and CDMA. Typical gains are on the order of $2-4$ dB at $10$\% packet error rates (PERs), and higher for lower target PERs. 

The relatively constant channel gain over all symbol transmissions, which is one of the hallmarks of OTFS, greatly reduces the overhead and complexity associated with physical layer adaptation. It also presents the transport and application layer with a robust slowly varying channel, which is highly desirable when running over TCP/IP \cite{Diggavi_et_al_2004} and for the delay-sensitive applications envisioned for 5G. Moreover, full diversity enables linear scaling of throughput with the number of antennas, regardless of channel Doppler. In addition to OTFS's full diversity benefits, since the delay-Doppler channel representation is very compact, OTFS enables dense and flexible packing of reference signals, a key requirement to support the large antenna arrays used in massive MIMO.  

\subsection{Related papers}

The delay-Doppler representation of {\em signals} goes back to work in mathematics and physics \cite{zak1967finite}; an excellent tutorial introduction to the Zak transform can be found in \cite{janssen1988zak}. The delay-Doppler representation of time-varying {\em channels}  is described in-depth in the seminal work of Bello \cite{bello1963characterization}; the generalization to directional time-varying channels, relevant to multiple-antenna systems, was given in \cite{steinbauer2001double}, \cite{kattenbach2002statistical}, and \cite{sayeed2002deconstructing}. 

Since the 1990s, a variety of papers have suggested the use of time-frequency diversity transmission. Refs. \cite{sayeed1999joint,sayeed1998multiuser} established a signal model that presents the received signal as a canonical decomposition into delay and Doppler shifted versions of a basis signal, and suggests a delay-Doppler RAKE receiver that exploits the dispersion in both dimensions. Extensions of these ideas to the multi-antenna case appear in \cite{sayeed2002deconstructing}. Ref. \cite{ma2002maximum} points out that a time-frequency Rake receiver does not obtain optimal diversity as it is not optimized on the transmit side, and designs linear precoders that obtain full diversity order in doubly selective channels. Training strategies for block precoders are described in \cite{ma2003optimal}. Guard intervals in the frequency domain are designed in \cite{dean2017new}. The approaches in these works all differ from OTFS in that their system designs are in the time-frequency domain rather than the delay-Doppler domain. 

Other authors have investigated channel estimation in delay-Doppler channels and the impact of imperfect CSI: \cite{he2008doubly,tugnait2010multiuser} propose a basis expansion model using discrete prolate spheroidal sequences; 
\cite{zhou2015energy} showed that imperfect CSI need not lead to a reduction of the Doppler diversity order. Note, however, that in this paper we will mostly assume perfect knowledge of the CSI at the receiver. 

A separate body of prior work focuses on time-frequency pulse shape design  to minimize dispersion after transmission through the channel, based on Gabor system theory. A special case of this body of work is pulse-shaped OFDM.  Various criteria for pulse shape optimization have been considered in earlier works including suppressing ISI and ICI \cite{kozek1998nonorthogonal}, maximizing signal-to-interference-plus-noise ratio (SINR) \cite{das2007max}, or optimizing spectral efficiency through the use of non-rectangular lattices in the time-frequency domain \cite{strohmer2003optimal}.  These works all differ from OTFS in that they attempt to mitigate or fully remove the ISI and ICI through pulse shape design in the time-frequency domain. OTFS, in contrast, is designed so that its information symbols experience minimal cross-interference as well as full diversity through appropriate design of the modulation lattice and pulse shape in the delay-Doppler domain. While Gabor system theory tries to create ambiguity functions\footnote{The ambiguity function of a signal $g(t)$, denoted by $A_g(\tau,\nu)$, is a 2D generalization of the classical 1D auto-correlation function, measuring the correlation of the signal with a copy of itself, time and frequency shifted by $\tau$ and $\nu$ respectively. The classical auto-correlation function of $g(t)$ coincides with $A_g(\tau,0)$} with a single peak at zero and vanishing along a lattice -  a shape consistent with the Balian-Lowe theorem - we will see later on that OTFS in contrast creates ambiguity functions that have a much sharper peak at zero, but repeat periodically along the lattice. 

OTFS was first described in our conference paper \cite{hadani2017orthogonal}, and this has inspired a number of follow-up works by different groups. Refs. \cite{raviteja2017low,li2017simple,zemen2017low,murali2018otfs} propose various types of simplified receiver structures, usually based on iterative approaches. Some discussion of a discrete signal model, modulator design, and performance analysis is given in \cite{farhang2017low,rezazadehreyhani2017analysis}. Discussion of the diversity order achievable with different block coder designs (including OTFS) is provided in \cite{zemen2017orthogonal}. The current paper provides a more comprehensive theoretical development than in \cite{hadani2017orthogonal} and we hope that it will encourage further exploration of various system attributes by the community.


\subsection{Remainder of the paper}

In Section \ref{sec:channel} we describe the wireless channel in terms of its delay-Doppler characteristics, for which OTFS is designed; the remainder of Sec. II provides mathematical preliminaries including a general framework for time-frequency modulation. 
Section \ref{sec:modulation} develops the details of OTFS as a modulation that matches wireless channel characteristics through two processing steps.
Next, we discuss the interpretation of OTFS, and its applications for various 5G scenarios. 
Section \ref{sec:results} presents performance results for OTFS coupled with equalization, demonstrating its advantages over OFDM in high Doppler channels, with short packets, 
and with MIMO arrays. 
The paper concludes in Section \ref{sec:conclusion}.


\section{Delay-Doppler representation of channels and signals}

This section presents the delay-Doppler representation of channels and signals, as well as a general mathematical description of modulation/demodulation in the time-frequency domain. The results established here will be used in Sec. III to give a concise explanation of OTFS and relate it to other modulation formats.  

\subsection{The Delay-Doppler Channel}\label{sec:channel}
It is well known since the classical paper of Bello \cite{bello1963characterization} that a time-varying propagation channel can be represented by either its time-varying impulse response, the time-varying transfer function, or the Doppler-variant impulse response.\footnote{A fourth representation, the Doppler-variant 
transfer function, is rarely used} Among these representations, the Doppler-variant impulse response is a natural fit to the propagation physics. Specifically, the complex baseband Doppler-variant impulse response $h_{\rm c}(\tau,\nu)$ characterizes the channel response to an impulse, at delay $\tau$ and Doppler $\nu$ \cite{jakes1974mobile}. 
The received signal due to an input signal $s(t)$ transmitted over this channel
is given by:
\begin{equation}\label{eq:doppler}
  r(t) = \iint {h_{\rm c}(\tau,\nu)e^{j2\pi\nu( t - \tau)}s(t - \tau)\,\mathrm{d}\tau \,\mathrm{d}\nu}.
\end{equation}

According to (\ref{eq:doppler}) the received signal is a superposition of reflected copies of the transmitted signal, where each copy is delayed by the path delay $\tau$, frequency shifted by the Doppler shift $\nu$ and weighted by the time-independent complex valued delay-Doppler impulse response $h_{\rm c}(\tau,\nu)$ for that particular $\tau$ and $\nu$. Due to the connection to the physical scatterers, the representation is physically meaningful and, under the assumption of single-scattering processes, the location of scatterers can be found directly from it. Typical Doppler shifts are on the order of 10 Hz - 1 kHz, though larger values may occur in scenarios with extremely high mobility (e.g., high-speed trains) and/or high carrier frequency.  

{\em Remark 1}: there exist two different interpretations of the Doppler-variant impulse response, which differ by a term $e^{j 2 \pi \nu \tau}$. The difference can be interpreted as the question of whether we first apply the delay shift and then the Doppler shift, or vice versa. In any case, as long as the notation is consistent, equivalent results can be obtained with either definition. Note that this ordering issue is equivalent to definitions of the time-varying impulse response that can be either the response of a system to a delta pulse at time $t$ or at time $t-\tau$ \cite{kattenbach1997charakterisierung}. 

A further important attribute of the delay-Doppler channel representation $h_{\rm c}(\tau,\nu)$  is its compactness and sparsity. Since typically there is only a small number of physical reflectors with associated Dopplers, far fewer parameters are needed for channel modeling and estimation in the delay-Doppler domain than in the time-frequency domain.\footnote{Of course, a representation with the same number of parameters can be found also in the time-frequency domain, but the representation is not naturally obvious, and as a matter of fact is often based on transformation to the delay-Doppler domain.} This sparse representation for typical channel models, including those in LTE, has important implications for channel estimation/prediction and tracking and also for taming complexity of channel equalization/pre-coding in high order MIMO and MU-MIMO systems.

\subsection{The Heisenberg transform and twisted convolution}

Conceptually,  Eq. \eqref{eq:doppler} can be interpreted as a linear operator $\Pi_h(\cdot)$ that is parameterized by the impulse response function $h=h_{\rm c}(\tau,\nu)$ and that is operating on the input signal  $s(t)$ to produce the output signal $r(t)$, that is:
\begin{equation}\label{eq:hilbert}
  \Pi_h(s):\hspace{1em}s(t) \xrightarrow{\Pi_h}r(t).
\end{equation}
In the mathematics literature, the operator parameterization $h \rightarrow \Pi_h$ is called the {\em Heisenberg transform} which can be viewed as a non-commutative generalization of the Fourier transform  \cite{mecklenbrauker1989tutorial}.

As we will see below, multi-carrier modulations also utilize the Heisenberg transform on the transmitted symbols, hence the received signal is a composition (cascade) of two Heisenberg transforms, one corresponding to the modulation, and the other corresponding to the channel. With this in mind, the main technical statement about the Heisenberg transform is the twisted convolution property which can be viewed as a non-commutative generalization of the convolution property of the Fourier transform. The precise statement is summarized in the following Lemma.

\begin{mylemma} {\bf Twisted convolution property.} \label{Twisted_Lemma} Let $\Pi_{h_1}$ and $\Pi_{h_2}$ be two Heisenberg operators parameterized by functions $h_1(\tau,\nu)$ and $h_2(\tau,\nu)$, as defined in (\ref{eq:doppler},\ref{eq:hilbert}), applied in composition to a signal $s(t)$. Then we have: 
\begin{equation}\label{eq:Heisenbergcascade}
\Pi_{h_2} \left (\Pi_{h_1} (s(t)) \right ) =  \Pi_h (s(t)) .
\end{equation}
where $h(\tau,\nu)=h_2(\tau,\nu) \ast_{\sigma} h_1(\tau,\nu)$ is the {\em twisted convolution} of $h_1(\tau,\nu)$ and $h_2(\tau,\nu)$, 
defined as:
\begin{equation}\label{eq:conv_mod}
   h(\tau,\nu)=\iint{h_2(\tau^\prime,\nu^\prime) h_1(\tau-\tau^\prime,\nu-\nu^\prime)e^{j2\pi \nu^\prime (\tau-\tau^\prime)}}\mathrm{d}\tau^\prime \mathrm{d}\nu^\prime.
\end{equation}
\end{mylemma}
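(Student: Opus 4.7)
The plan is to prove the identity by direct substitution and a change of variables, converting the fourfold integral representing the composition into a nested pair of integrals whose inner kernel is exactly the twisted convolution.

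First I would write out the composition explicitly. Let $r_1(t)=\Pi_{h_1}(s)(t)=\iint h_1(\tau_1,\nu_1)e^{j2\pi\nu_1(t-\tau_1)}s(t-\tau_1)\,\mathrm{d}\tau_1\,\mathrm{d}\nu_1$ and then expand $r(t)=\Pi_{h_2}(r_1)(t)$ using the definition (\ref{eq:doppler}). Substituting the inner expression in and invoking Fubini's theorem (assuming the standard integrability/decay conditions on $h_1$, $h_2$, and $s$ that make the operators well defined), this yields the single fourfold integral
\begin{equation*}
r(t)=\iiiint h_2(\tau_2,\nu_2)\,h_1(\tau_1,\nu_1)\,e^{j2\pi\nu_2(t-\tau_2)}e^{j2\pi\nu_1(t-\tau_2-\tau_1)}s(t-\tau_2-\tau_1)\,\mathrm{d}\tau_1\mathrm{d}\nu_1\mathrm{d}\tau_2\mathrm{d}\nu_2.
\end{equation*}

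Next I would perform the change of variables $\tau=\tau_1+\tau_2$, $\nu=\nu_1+\nu_2$, keeping $\tau'=\tau_2$ and $\nu'=\nu_2$ as the inner variables, so that $\tau_1=\tau-\tau'$ and $\nu_1=\nu-\nu'$. The Jacobian is unity. The crucial and slightly subtle step, which I expect to be the main obstacle, is the algebraic simplification of the combined phase: one must show that
\begin{equation*}
\nu_2(t-\tau_2)+\nu_1(t-\tau_2-\tau_1)=\nu(t-\tau)+\nu'(\tau-\tau'),
\end{equation*}
which is a short but deliberate manipulation. This is precisely where the non-commutativity of the Heisenberg operators surfaces: the residual cross term $\nu'(\tau-\tau')$ cannot be absorbed into either of the two Heisenberg phases and is what distinguishes twisted convolution from ordinary convolution.

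Finally, I would apply Fubini a second time to reorder the integrals, grouping the inner pair in $(\tau',\nu')$ and the outer pair in $(\tau,\nu)$. The inner double integral is then exactly $(h_2\ast_\sigma h_1)(\tau,\nu)$ as defined in (\ref{eq:conv_mod}), while the outer double integral has precisely the form of $\Pi_h(s)(t)$ applied with kernel $h=h_2\ast_\sigma h_1$. This yields the desired identity (\ref{eq:Heisenbergcascade}) and completes the proof. The only technical hypotheses needed are those ensuring the two applications of Fubini are legitimate; these are standard in the settings where $h_1,h_2$ are treated as Schwartz-class functions or, more generally, where the Heisenberg transforms are defined.
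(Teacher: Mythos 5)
Your proposal is correct and follows essentially the same route as the paper's proof: substitute $r_1$ into the outer Heisenberg operator, change variables to $\tau=\tau_1+\tau_2$, $\nu=\nu_1+\nu_2$, and collect the phases (the paper compresses this into ``after some algebra manipulations''). Your explicit phase identity $\nu_2(t-\tau_2)+\nu_1(t-\tau_2-\tau_1)=\nu(t-\tau)+\nu'(\tau-\tau')$ checks out and correctly isolates the residual cross term that produces the twist.
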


\begin{proof}
Let
\begin{equation}\label{eq:g1}
  r_1(t) = \iint h_1(\tau,\nu)e^{j2\pi \nu(t-\tau)}s(t-\tau)\mathrm{d}\tau \mathrm{d}\nu
\end{equation}
\begin{equation}\label{eq:g2}
  r(t) = \iint h_2(\tau,\nu)e^{j2\pi \nu(t-\tau)}r_1(t-\tau)\mathrm{d}\tau \mathrm{d}\nu
\end{equation}
Substituting \eqref{eq:g1} into \eqref{eq:g2} we obtain after some algebra manipulations:
\begin{equation}
  r(t) = \iint h(\tau,\nu)e^{j2\pi \nu(t-\tau)} s(t-\tau)\mathrm{d}\tau \mathrm{d}\nu
\end{equation}
with $h(\tau,\nu)$ given by \eqref{eq:conv_mod}.
\end{proof}

\subsection{Time-Frequency Modulation}\label{sec:tfmodulation}
All time-frequency modulations (also termed multi-carrier modulations) can be cast in a unified framework consisting of the following components:
\begin{itemize}
  \item A lattice (also termed a grid) $\Lambda$ in the time-frequency domain that samples the time and frequency axes at integer multiples of $T$ and $\Delta f$ respectively, that is: 
  \begin{equation}\label{eq:lattice}
    \Lambda = \lbrace (nT, m \Delta f): n,m \in \mathbb{Z}\}.
  \end{equation}
  \item A packet burst with total duration of $NT$ seconds and total bandwidth of $M\Delta f$ Hz.
  \item A 2D sequence of modulated symbols $X[n,m]$ that we wish to transmit over a given packet burst, parametrized along a finite number of points of the lattice $\Lambda$ with indices $n=0,\dotsc,N-1$ and $m=0,\dotsc,M-1$.
  \item A transmit pulse $g_{\rm tx}(t)$ and associated receive pulse $g_{\rm rx}(t)$
whose inner product is bi-orthogonal with respect to translations by integer multiples of time $T$ and frequency $\Delta f$, that is:
  \begin{equation}\label{eq:orthog_cond}
\int{ e^{-j2\pi m\Delta f(t-nT)} g_{\rm rx}^*(t-nT)g_{\rm tx}(t)\mathrm{d}t}
   = \delta(m) \delta(n).
  \end{equation}
\end{itemize}
Note that the bi-orthogonality property \eqref{eq:orthog_cond} of the pulse shapes ensures that cross-symbol  interference (aka interference among adjacent grid points) is eliminated in symbol reception, as will be shown in the next subsection. 

{\bf Time-frequency modulator:} A time-frequency modulator with these components maps the 2D symbol sequence $X[n,m]$ defined on the lattice $\Lambda$ to a transmitted signal $s(t)$ defined as a superposition of delay-and-modulate operations applied to the transmit pulse $g_{\rm tx}(t)$, as follows:
\begin{equation}\label{eq:superposition}
  s(t)=\sum_{m=0}^{M-1} \sum_{n=0}^{N-1}X[n,m]e^{j2\pi m\Delta f(t-nT)}g_{\rm tx}(t-nT).
\end{equation}
Note that the modulation rule (\ref{eq:superposition}) coincides with the Heisenberg transform of the symbol sequence $X[n,m]$ applied to the transmit pulse. This can be viewed as a two-dimensional generalization of the OFDM modulation transform that maps modulated symbols multiplexed in the frequency domain (i.e. on each subcarrier) to the transmitted signal defined in the time domain. In the same manner that the channel operation \eqref{eq:doppler} can be interpreted as a Heisenberg operator \eqref{eq:hilbert} parameterized by the Doppler-varying impulse response applied to the transmitted signal, the  modulation rule (\ref{eq:superposition}) can also be interpreted as a Heisenberg operator $\Pi_X(\cdot)$ parameterized by the symbol sequence $X[n,m]$ that is applied to the transmit pulse shape $g_{\rm tx}(t)$, that is:  
\begin{equation}\label{eq:heisenberg}
  s(t) =\Pi_X(g_{\rm tx}).
\end{equation}
This interpretation is useful because it allows us to consider the received signal as a composition of two 
Heisenberg operators, one associated with the modulation rule and the other associated with the channel. The composition of the Heisenberg operators corresponding to the channel, \eqref{eq:doppler}, and to the modulation, \eqref{eq:heisenberg}, yields the received signal in the form $r(t)=\Pi_{h_c} (\Pi_X(g_{\rm tx}))+\Tilde{v}(t)$ where $\Tilde{v}(t)$ is additive noise at the receiver input. By applying the twisted convolution property of Lemma \ref{Twisted_Lemma}, we have that $\Pi_{h_c} (\Pi_X(g_{\rm tx}))=\Pi_{h_c \ast_{\sigma} X}(g_{\rm tx})$, hence $r(t)$ can be written more explicitly as:
\begin{equation}\label{eq:heis_def}
  \begin{split}
    r(t)=\iint{f(\tau,\nu)e^{j2\pi \nu (t-\tau)}g_{\rm tx}(t-\tau)}\mathrm{d}\tau \mathrm{d}\nu + \Tilde{v}(t),
  \end{split}
\end{equation}
where $f(\tau,\nu)$ is the twisted convolution of the continuous function $h_{\rm c}(\tau,\nu)$ with the discrete function $X[n,m]$, i.e., $f(\tau,\nu)=h_{\rm c}(\tau,\nu) \ast_{\sigma} X[n,m]$, which can be written more explicitly as: 

\begin{eqnarray}\label{eq:X_conv_h}
  \lefteqn{f(\tau,\nu)=} \nonumber \\ 
& & \hspace{-.3in} \sum_{n=0}^{N-1}\sum_{m=0}^{M-1} h_{\rm c}(\tau-nT,\nu-m\Delta f)X[n,m]e^{j2\pi(\nu-m\Delta f)nT}.\nonumber \\
&&
\end{eqnarray}
With this result established, we are now ready to examine the receiver processing steps.


{\bf Time-frequency demodulator:} The sufficient statistic for symbol detection is obtained by matched filtering of the received signal $r(t)$ with the {\em channel-distorted}, information-carrying pulses (assuming that the additive channel noise is white and Gaussian).\footnote{The sufficient statistic is obtained if the receiver correlator pulse shape is matched to the transmitter pulse, i.e., $g_{\rm rx} (t) = g_{\rm tx} (t)$. Instead, we consider a more general formulation where $g_{\rm rx} (t) \ne g_{\rm tx} (t)$ to accommodate cases such as addition of a cyclic prefix at the transmitter in OFDM.} The matched filter first computes the cross-ambiguity function between the received signal $r(t)$ and the receive pulse shape $g_{\rm rx}$. This function is denoted $A_{g_{\rm rx},r}(\tau,\nu)$ and is given by:
\begin{equation}
    A_{g_{\rm rx},r}(\tau,\nu)
\triangleq \int{   e^{-j2\pi \nu (t-\tau)}g_{\rm rx}^*(t-\tau)} r(t) \mathrm{d}t ~~.
\end{equation}
The cross-ambiguity function can be interpreted as a two-dimensional (delay-Doppler) correlation function.\footnote{Similar to Remark 1, an alternative definition of the cross-ambiguity function exists; we use here the definition consistent with (1) 
.} It is worth noting the central role the cross-ambiguity function plays in radar theory \cite{Stein_1981}, indicating the implicit link between communication theory and radar.\footnote{This link will become clear below once we introduce the OTFS basis functions, which, being concentrated in the delay-Doppler domain, are well suited (and, through appropriate choice of $g_{\rm tx}$ can be made optimum) for radar purposes \cite{auslander1985radar}}. The matched filter output is obtained by sampling the cross-ambiguity function along the points of the lattice $\Lambda$, i.e., at integer multiples of time $T$ and frequency $\Delta f$, yielding the 2D sequence:
\begin{equation}\label{eq:wigner}
  \hat{Y}[n,m]=A_{g_{\rm rx},r}(\tau,\nu) \rvert_{\tau=nT, \nu=m\Delta f}.
\end{equation}
The sampled cross-ambiguity function \eqref{eq:wigner} constitutes a transform  mapping the 1D continuous function $r(t)$ to the 2D sequence $\hat{Y}[n,m]$. This transform inverts the discrete Heisenberg transform and is referred to in the mathematics literature as the {\rm discrete Wigner transform}. The discrete Wigner transform can be viewed as a generalization of the OFDM de-modulator that is mapping a received OFDM signal to modulated symbols on the frequency grid. We now proceed to calculate the relationship between the matched filter output $\hat{Y}[n,m]$ and the transmitter input $X[n,m]$.

{\bf Time-frequency input-output relation:} We have already established in \eqref{eq:heis_def} that the input to the matched filter $r(t)$ can be expressed as the sum of a noise term $\Tilde{v}(t)$ and a signal term $\Pi_f(g_{\rm tx}(t))$ obtained as the Heisenberg operator parameterized by the impulse response $f(\tau,\nu)$ applied to the pulse shape $g_{\rm tx}(t)$. Consequently, the output of the matched filter, before sampling, is a sum of two terms:
\begin{equation}\label{eq:mf_out}
    \hat{Y}(t,f)
=A_{g_{\rm rx},\Pi_f(g_{\rm tx})}(\tau,\nu)_{\vert \tau=t,\nu=f}+A_{g_{\rm rx},\Tilde{v}}(\tau,\nu)_{\vert \tau=t,\nu=f}.
\end{equation}
The second term on the right side is the contribution of noise, which we denote by $V(t,f)$, while the first term is the matched filter output in the absence of noise, which we denote by $Y(t,f)$. Direct calculation reveals that the noise-free component can be expressed as twisted convolution of three terms:\footnote{while the following equation seems to mix notation from the time/frequency and the delay/Doppler domains, it is similar to the often-used notation of convolving a signal $s(t)$ with a system response $h(\tau)$ to provide an output $r(t)$.}
\begin{equation}\label{eq:cross_amb}
  Y(t,f)=h_{\rm c}(\tau,\nu)\ast_{\sigma}  X[n,m] \ast_{\sigma}  A_{g_{\rm rx},g_{\rm tx}}(\tau,\nu).
\end{equation}
Using this expression for the first term in \eqref{eq:mf_out} yields:
\begin{equation} \label{eq:e2e_channel_desc}
    \hat{Y}(t,f)
=h_{\rm c}(\tau,\nu)\ast_{\sigma}  X[n,m] \ast_{\sigma}  A_{g_{\rm rx},g_{\rm tx}}(\tau,\nu)+V(t,f).
\end{equation}
The matched filter output estimate of the modulation symbols is obtained by evaluating the continuous function $\hat{Y}(t,f)$ along the points of the lattice $\Lambda$, that is:
\begin{equation}\label{eq:mf_output_est}
  \hat{Y}[n,m]=\hat{Y}(t,f) \vert_{t=nT,f=m\Delta f}.
\end{equation}
To calculate the end-to-end input output relation, let us first consider the simple case of an ideal channel $h_{\rm c}(\tau,\nu)=\delta(\tau) \delta(\nu)$. In this case, direct calculation of the right hand side of (\ref{eq:e2e_channel_desc}) yields:
\begin{flalign}\label{eq:Y_conv_rel}
   \hat{Y}[n,m]=   \sum_{n^\prime=0}^{N-1}  & \sum_{m^\prime=0}^{M-1}  
  X[n^\prime,m^\prime] \nonumber \\ 
   \times &  A_{g_{\rm rx},g_{\rm tx}}((n-n^\prime)T,(m-m^\prime)\Delta f)+V[n,m],
\end{flalign}
where $V[n,m]=V(t,f) \vert_{t=nT,f=m\Delta f}$ is the sampling of the noise term along the lattice $\Lambda$. Invoking the bi-orthogonality condition (\ref{eq:orthog_cond}), we get in this case that:
\begin{equation}\label{eq:Y_X_V}
  \hat{Y}[n,m]=X[n,m]+V[n,m].
\end{equation}
We conclude that for an ideal channel the matched filter output perfectly recovers the transmitted input symbols up to an uncorrelated noise term. This is the generalization of the well-known perfect reconstruction property for OFDM in the presence of non-dispersive channels. 

Let us now consider the matched filter output for more general channels characterized by a non-trivial impulse response $h_{\rm c}(\tau,\nu)$. First, note that the impulse response has finite support bounded by the maximum delay and Doppler spreads $(\tau_\text{max}, \nu_\text{max})$ of the reflectors/scatterers. For simplicity, let us assume that the bi-orthogonality condition (\ref{eq:orthog_cond}) holds in a robust manner in the sense that the cross-ambiguity function vanishes in a neighborhood of each non-zero lattice point $(nT,m\Delta f)$ at least as large as the support of the channel response, that is, $A_{g_{\rm rx},g_{\rm tx}}(\tau,\nu)=0$ for $\tau \in (nT-\tau_\text{max}, nT+\tau_\text{max})$, $\nu \in (m\Delta f-\nu_\text{max}, m\Delta f+\nu_\text{max})$.\footnote{We will discuss below the situation when this robustness condition is violated.} Under these assumptions, direct calculation of the right hand side of (\ref{eq:e2e_channel_desc}) yields the following generalization of (\ref{eq:Y_conv_rel}):
\begin{multline}\label{eq:theorem2_proof}
  \hat{Y}[n,m] = \sum_{n^\prime=0}^{N-1} \sum_{m^\prime=0}^{M-1} X[n^\prime, m^\prime]\\
  \times \iint e^{-j2\pi \nu \tau} h_{\rm c}(\tau, \nu) A_{g_{\rm rx},g_{\rm tx}}((n-n')T-\tau,(m-m')\Delta f-\nu) \\
  e^{-j2\pi(m'\Delta f \tau -nT \nu)} \mathrm{d}\tau \mathrm{d}\nu + V[n,m].
\end{multline}
One can easily verify that due to the bi-orthogonality robustness condition, only the zero term $n'=n,m'=m$ survives in the right hand side of \eqref{eq:theorem2_proof} and as a result we obtain: 
\begin{equation}
\hat{Y}[n,m] = H[n,m] X[n,m]+V[n,m],
\end{equation}
where the complex gain factor $H[n,m]$ is given by:
\begin{equation}\label{eq:Y_H_X}
  H[n,m] = \iint{e^{-j2\pi \nu \tau}h_{\rm c}(\tau,\nu)e^{-j2\pi (m\Delta f\tau -nT\nu)}}\mathrm{d}\tau\mathrm{d}\nu.
\end{equation}
Observe that there is no cross-symbol interference affecting the sequence $X[n,m]$ in either time $n$ or frequency $m$, implying that the received symbol coincides with the transmitted symbol except for the multiplicative scale factor $H[n,m]$ (and additive noise term). This is the characteristic channel relation of an OFDM transmission through a time-invariant frequency-selective channel. It is important to note that if the cross-ambiguity function is not robustly bi-orthogonal then there is some cross-symbol interference. The bi-orthogonality and, in its absence, residual cross-symbol interference depends on the particular structure of the transmit and receive pulses $g_{\rm tx}$ and $g_{\rm rx}$. The complex gain factor $H[n,m]$ has an expression as a weighted superposition of Fourier exponential functions, (\ref{eq:Y_H_X}), revealing an interesting relation between the discrete time varying transfer function $H[n,m]$ and the delay-Doppler impulse response $h_c(\tau,\nu)$. This relation can be formally expressed via a two-dimensional transform called the symplectic Fourier transform. We devote the next subsection to the description of the symplectic Fourier transform and then we use it as the underlying building block of implementing OTFS as a time-frequency overlay modulation.


\subsection{The Symplectic Fourier Transform}

The Symplectic Fourier Transform is a variant of the 2D Fourier transform which is naturally associated with the Fourier kernel $e^{-j2\pi(m\Delta f\tau -nT\nu)}$ used in (\ref{eq:Y_H_X}) for converting between the delay-Doppler and time-frequency channel representations.
Specifically, we will focus on a finite version of the transform called the finite symplectic Fourier transform, which is denoted by SFFT. The input of the SFFT is a 2D periodic sequence $x_p[k,l]$ with periods $(M,N)$ and the output of the SFFT is a 2D periodic sequence $X_p[n,m]=\text{SFFT}(x_p[k,l])$ with periods $(N,M)$ (note that the period orders are reversed). The output and input sequences should be viewed as defined, respectively, along the points of the time-frequency lattice $\Lambda$, (\ref{eq:lattice}), and the reciprocal delay-Doppler lattice $\Lambda^\perp$ that samples the delay axis at integer multiples of $\Delta \tau = \frac{1}{M \Delta f}$ and the Doppler axis at integer multiples of $\Delta \nu = \frac{1}{NT}$, i.e.,:

\begin{equation}
\label{eq:Dual_Lat}
\Lambda^\perp=\lbrace (k\Delta \tau, l \Delta \nu): k,l \in \mathbb{Z} \rbrace.
\end{equation}
Note that the delay interval $\Delta \tau$ is inversely proportional to the burst bandwidth $M \Delta f$ and the Doppler interval $\Delta \nu$ is inversely proportional to the burst duration $NT$. Hence, increasing the burst duration/bandwidth increases the sampling resolution in delay/Doppler respectively. This is consistent with the principles of radar asserting that range/velocity resolution is proportional to the bandwidth/duration of the probing waveform.

The output sequence is given by the following Fourier summation formula:
\begin{equation}
\label{eq:SFFT}
    X_p[n,m] 
= \sum_{k=0}^{M-1} \sum_{l=0}^{N-1}  \! x_p[k,l]e^{-j2\pi (\frac{mk}{M}-\frac{nl}{N})}.
\end{equation}
Note that the SFFT couples the frequency variable with the delay variable and the time variable with the Doppler variable with a minus sign. This type of coupling is referred to in the mathematics literature as symplectic coupling. 

The inverse transform $ x_p[k,l]=\text{SFFT}^{-1}(X_p[n,m])$ is given by a similar summation formula:
\begin{equation}
\label{eq:ISFFT}
    x_p[k,l] 
= \frac{1}{MN}\sum_{n=0}^{N-1} \sum_{m=0}^{M-1} \! X_p[n,m]e^{-j2\pi (\frac{ln}{N}-\frac{km}{M})}.
\end{equation}
The main property of the SFFT is that it interchanges between circular convolution and point-wise multiplication of periodic sequences, analogous to the similar convolution property of the conventional finite Fourier transform. This statement is summarized in the following theorem.  
\begin{mytheorem} {\bf Symplectic convolution property.}
Let  $x_1[k,l]$ and $x_2[k,l]$ be periodic 2D sequences with periods $(M,N)$. In addition, let $X_1[n,m]=\text{SFFT}(x_1[k,l])$ and $X_2[n,m]=\text{SFFT}(x_2[k,l])$ be the corresponding Fourier transforms. The following relation holds:
\begin{equation}
\label{eq:prop2}
\text{SFFT}(x_1[k,l] \circledast  x_2[k,l])  =  X_1[n,m] X_2[n,m],
\end{equation}
where $\circledast$ denotes 2D circular convolution. 
\end{mytheorem}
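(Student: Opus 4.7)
The plan is to mimic the standard proof of the convolution theorem for the ordinary DFT, with extra care taken for (i) the symplectic coupling in the Fourier kernel (frequency with delay, time with Doppler, with a sign twist) and (ii) the periodicity needed to legitimize the change of summation variable inside a circular convolution.

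First I would expand the right-hand side of \eqref{eq:prop2} by substituting the definition \eqref{eq:SFFT} of the SFFT applied to the 2D circular convolution $y[k,l] \triangleq (x_1 \circledast x_2)[k,l] = \sum_{k'=0}^{M-1}\sum_{l'=0}^{N-1} x_1[k',l']\, x_2[k-k',l-l']$, where the index differences are understood modulo $M$ and $N$ respectively. This gives a quadruple sum
\begin{equation*}
\text{SFFT}(y)[n,m] = \sum_{k,l,k',l'} x_1[k',l']\, x_2[k-k',l-l']\, e^{-j2\pi\left(\tfrac{mk}{M}-\tfrac{nl}{N}\right)}.
\end{equation*}

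Next, I would make the change of variables $k'' = k-k' \pmod M$ and $l'' = l-l' \pmod N$ in the inner sum over $(k,l)$. The key justification is that both $x_2[\cdot,\cdot]$ and the symplectic exponential kernel are periodic with periods $(M,N)$, so summing $(k,l)$ over one full period is equivalent to summing $(k'',l'')$ over one full period for every fixed $(k',l')$. After this substitution I would split the exponential using the additive identity in the phase,
\begin{equation*}
e^{-j2\pi\left(\tfrac{m(k'+k'')}{M}-\tfrac{n(l'+l'')}{N}\right)} = e^{-j2\pi\left(\tfrac{mk'}{M}-\tfrac{nl'}{N}\right)} \cdot e^{-j2\pi\left(\tfrac{mk''}{M}-\tfrac{nl''}{N}\right)},
\end{equation*}
which is the crucial step that makes the symplectic kernel behave multiplicatively and lets the quadruple sum separate.

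At this point the double sum factorizes into the product $\left(\sum_{k',l'} x_1[k',l']\, e^{-j2\pi(mk'/M - nl'/N)}\right)\left(\sum_{k'',l''} x_2[k'',l'']\, e^{-j2\pi(mk''/M - nl''/N)}\right)$, which by \eqref{eq:SFFT} is exactly $X_1[n,m]\, X_2[n,m]$, as required.

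The only genuine subtlety, and the step I would write out most carefully, is the change-of-variables argument: because the symplectic kernel couples $m$ with $k$ and $n$ with $l$ (rather than mixing them), its periodicity in $k \bmod M$ and $l \bmod N$ is the same as for the ordinary 2D DFT kernel, so no extra phase twist appears when shifting the index of $x_2$. The sign asymmetry ($-mk/M$ versus $+nl/N$) propagates harmlessly through both factors, which is precisely why the SFFT diagonalizes circular convolution just like the standard DFT, with the roles of delay/Doppler and frequency/time interchanged in the indexing.
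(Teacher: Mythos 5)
Your proof is correct and follows essentially the same route as the paper: the paper first isolates the shift property $\text{SFFT}(x_2[k-k',l-l']) = X_2[n,m]e^{-j2\pi(mk'/M - nl'/N)}$ and then applies it term by term to the convolution sum, which is exactly your change-of-variables plus kernel-splitting argument packaged as a separate step. Nothing further is needed.
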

\begin{proof}
Based on the definition of the SFFT, it is straightforward to verify that translation in delay-Doppler converts into a linear phase in time-frequency:
\begin{equation}
\text{SFFT}(x_2[k-k',l-l']) = X_2[n,m]e^{-j2 \pi \left(\frac{mk'}{M}-\frac{nl'}{N}\right)}.
\end{equation}
Based on this result we can evaluate the SFFT of a circular convolution as:
\begin{equation}
  \begin{split}
    &\text{SFFT} \left\{ \sum_{k'=0}^{M-1} \sum_{l'=0}^{N-1}x_1[k',l']x_2[(k-k') \text{mod}M, \,(l-l') \text{mod}N] \right\}\\
    &= \sum_{k'=0}^{M-1} \sum_{l'=0}^{N-1} x_1[k',l']X_2[n,m] e^{-j2 \pi \left(\frac{mk'}{M}-\frac{nl'}{N}\right)}\\
    &= X_1[n,m]X_2[n,m],
  \end{split}
\end{equation}
yielding the desired result.
\end{proof}

\section{OTFS Modulation}\label{sec:modulation}

We are now ready to describe OTFS modulation cast in the framework of Sec. II as a time-frequency multi-carrier modulation equipped with additional pre-processing transforming from the delay-Doppler domain to the time-frequency domain. 

\subsection{Interpretations of OTFS modulation}

Before going into the mathematical description of OTFS, we first describe the intuition behind it. OTFS can be described in several equivalent ways:
\begin{itemize}
\item {\em Modulation in the delay-Doppler domain:} Just as OFDM can be interpreted as carrying information over a compact ''basis pulse" in the time-frequency domain, OTFS can be interpreted as a dual that carries information over a compact basis pulse in the delay-Doppler domain. These delay-Doppler pulses are modulated with QAM (quadrature amplitude modulation) such that the QAM symbols carry the information. While the interaction of the OFDM waveform with the channel leads to a multiplication of the basis pulses with the time-varying transfer function, the interaction of the OTFS waveform with the channel leads to the dual operation, given by a two-dimensional convolution, of its basis pulse with the Doppler-variant impulse response (the two-dimensional Fourier dual of the time-varying transfer function). In the following we will mainly use this interpretation for the mathematical derivations.
\item {\em Spreading in the time-frequency domain:} OTFS can also be viewed purely in the time-frequency domain as a spreading scheme that carries information over non-compact (as a matter of fact, maximally spread-out) orthogonal basis functions in the time-frequency domain, see Fig. \ref{fig:basisfunctions}. With that interpretation, OTFS becomes a two-dimensional version of CDMA. Considerations about achievable diversity can be most easily obtained from this interpretation.\footnote{The lattice description of OTFS also provides a natural connection to CDMA.} 
\item{\em Zak representation:} A more canonical description of OTFS casts it as the modulation format naturally associated with a fundamental transform referred to in the mathematics literature as the Zak transform, analogous to the fact that OFDM is the modulation format naturally associated with the Fourier transform. An elaborate derivation along these lines is given in\cite{OTFS_Zak}. 
\end{itemize}

\begin{figure}
  \centering
  \includegraphics[scale=.3]{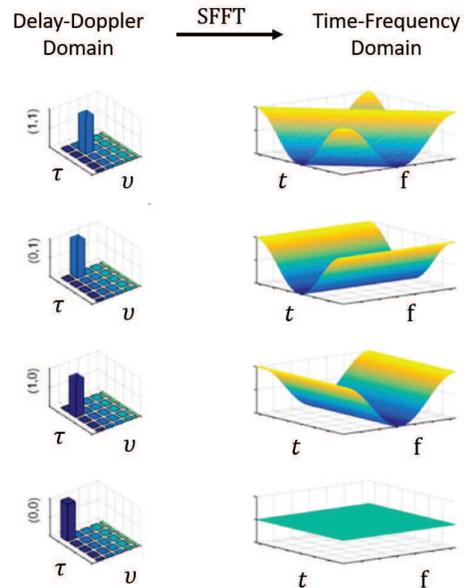} 
  \caption{2D basis functions in the Information (delay-Doppler) domain (left), and the corresponding symplectic Fourier dual basis functions in the time-frequency domain (right).}
  \label{fig:basisfunctions}
\end{figure}

\subsection{OTFS Modulation and Demodulation} \label{sec:OTFS-2D-transform}

OTFS modulation is a composition of two transforms at both the transmitter and the receiver, as shown in Fig. \ref{fig:otfs_ofdm}. The transmitter first maps the 2D sequence of information symbols $x[k,l]$ residing along the points of the reciprocal lattice $\Lambda^\perp$, see (\ref{eq:Dual_Lat}), in the delay-Doppler domain to a 2D sequence of complex numbers $X[n,m]$ residing along the points of the lattice $\Lambda$, (\ref{eq:lattice}), in the time-frequency domain. This is done through a combination of the finite symplectic Fourier transform and windowing. We call this composition of operations the {\em OTFS transform}. Next the Heisenberg transform is applied to the sequence $X[n,m]$ to convert the time-frequency modulated symbols to the time domain signal $s(t)$ for transmission over the channel. The reverse operations are performed in the receiver, mapping the received time signal $r(t)$ first to the time-frequency domain through the discrete Wigner transform, and then, via the inverse finite symplectic Fourier transform, to the delay-Doppler domain for symbol detection.  

\begin{figure}
  \centering
  \includegraphics[scale=.2]{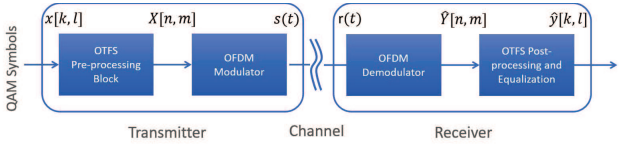} 
  \caption{OTFS Modulation Block Diagram: Transmitter and Receiver}
  \label{fig:otfs_ofdm}
\end{figure}

\textbf{OTFS delay-Doppler modulator}: Consider a finite 2D sequence of QAM information symbols $x[k,l]$, where $k=0,\dotsc, M-1$ and  $l=0,\dotsc, N-1$ that we wish to transmit. Let us denote by $x_p[k,l]$ the 2D periodized version of $x[k,l]$ with periods $(M,N)$. Further, let us assume a time-frequency modulation system defined by the lattice, packet burst, and bi-orthogonal transmit and receive pulses as described in Section~\ref{sec:tfmodulation}. In addition, let us assume a square summable transmit windowing function $W_{\rm tx}[n,m]$ that  multiplies the modulation symbols in the time-frequency domain. Given the above components, the OTFS modulated symbols are defined as follows:
\begin{equation}\label{eq:otfs_def}
    X[n,m] = W_{\rm tx}[n,m] \, \text{SFFT}(x_p[k,l]).
\end{equation}
The transform \eqref{eq:otfs_def} is comprised of the SFFT followed by a windowing operation in time-frequency and is referred to as the {\em OTFS transform}. The transmitted signal is obtained from the modulated sequence using the Heisenberg transform defined in \eqref{eq:heisenberg} as follows:
\begin{equation}
s(t) = \Pi_X(g_{\rm tx}(t)).
\end{equation}
 The composition of the OTFS transform and the Heisenberg transform comprises the OTFS modulation, as shown in the two transmitter blocks of Fig.~\ref{fig:otfs_ofdm}. An alternative interpretation expresses the output of the OTFS transform in the form:
\begin{equation}\label{eq:Xnm}
    X[n,m] =W_{\rm tx}[n,m] \sum_{k=0}^{M-1}\sum_{l=0}^{N-1} \! x[k,l] b_{k,l}[n,m],
 \end{equation}
 where $b_{k,l}[n,m]$ is the 2D sequence given by the sampled symplectic Fourier exponential function: 
 \begin{equation}   
    b_{k,l}[n,m] = e^{-j2\pi \left(\frac{mk}{M}-\frac{nl}{N}\right)}.
\end{equation}
This interpretation of \eqref{eq:Xnm} casts the OTFS transform as a CDMA like spreading system in time-frequency. It shows that each information symbol $x[k,l]$ is modulated by a 2D basis function $b_{k,l}[n,m]$ in the time-frequency domain, the shape of which is shown in Fig. \ref{fig:basisfunctions}. From this interpretation it is clear that every OTFS QAM symbol is spread over the full  time-frequency grid and hence is able to exploit the diversity associated with all the modes of the channel. 

\textbf{OTFS delay-Doppler demodulator}: For the definition of the demodulator let us assume a square summable receive windowing function $W_{\rm rx}[n,m]$ and consider a receive signal $r(t)$. The receive signal is demodulated as described in the following four steps:
\begin{enumerate}
  \item Apply the (discrete) Wigner transform to the signal $r(t)$ to obtain a time-frequency 2D sequence of demodulated symbols with unbounded support:
  \begin{equation}\label{eq:Y_A}
    \hat{Y}[n,m] = A_{g_{\rm rx},r}(\tau,\nu)\vert_{\tau=nT,\nu=m\Delta f}.
  \end{equation}
  \item Apply the receive window function $W_{\rm rx}[n,m]$ to the sequence $\hat{Y}[n,m]$ to obtain a shaped time-frequency 2D sequence of bounded support: 
    \begin{equation}
        \hat{Y}_W[n,m] = W_{\rm rx}[n,m]\, \hat{Y}[n,m]. 
    \end{equation}
  \item Periodize the sequence $\hat{Y}_W[n,m]$ to obtain a periodic time-frequency 2D sequence with periods $(N,M)$ along time and frequency respectively:  
    \begin{equation}
        \hat{Y}_p[n,m] = \sum_{n'=-\infty}^\infty \sum_{m'=-\infty}^{\infty} \hat{Y}_W[n-n'N,m-m'M].
    \end{equation}  
  \item Apply the inverse SFFT to the periodic time-frequency sequence $\hat{Y}_p[n,m]$ to obtain a periodic delay-Doppler sequence: 
    \begin{equation}\label{eq:xhat_sfft}
      \hat{y}_p[k,l] = \text{SFFT}^{-1}(\hat{Y}_p[n,m]).
    \end{equation}
\end{enumerate}
The output sequence of demodulated symbols is obtained as $\hat{y}[k,l]=\hat{y}_p[k,l]$ for $k=0,..,M-1$ and $l=0,..,N-1$. The last step can be interpreted as a projection of the time-frequency modulation symbols onto the two-dimensional orthogonal basis functions $b_{k,l}[n,m]$ as follows:
\begin{equation}\label{eq:xYb}
        \hat{y}[k,l] = \frac{1}{NM}\sum_{n=0}^{N-1} \sum_{m=0}^{M-1} \! \hat{Y}_p[n,m]b_{k,l}^*[n,m],
\end{equation}
where $b_{k,l}^*[n,m]$ stands for the conjugate 2D basis function: 
\begin{equation}
 b_{k,l}^*[n,m] = e^{-j2\pi \left(\frac{ln}{N}-\frac{km}{M} \right)}.
\end{equation}

{\bf OTFS delay-Doppler input-output relation:} We conclude this development with a description of the input-output relation between the periodic delay-Doppler modulated and demodulated sequences $x_p$ and $\hat{y}_p$ respectively. The relation is roughly given, up to an additive noise term, by convolution with the Doppler variant channel impulse response $h_c(\tau,\nu)$. In more precise terms, consider the periodic convolution of the channel impulse response with a filtering function as follows:\footnote{The window $w(\tau,\nu)$ is circularly convolved with the Doppler variant channel impulse response $h_{\rm c}(\tau,\nu)$ multiplied by the complex quadratic exponential $e^{-j2\pi \nu \tau}$.}
\begin{equation}\label{eq:window}
  h_w(\tau,\nu) = \iint{ e^{-j2\pi \nu' \tau'}h_{\rm c}(\tau',\nu') \, w(\nu-\nu', \tau-\tau')\mathrm{d}\tau' \mathrm{d}\nu'}.
\end{equation}
The filtering function $w(\tau,\nu)$ is a periodic function with periods $(M \Delta \tau,N \Delta \nu)$ in delay and Doppler respectively, obtained as the inverse discrete symplectic Fourier transform (denoted by SDFT) of a time-frequency window $W[n,m]$, that is:
\begin{equation}\label{eq:sfft_W}
  w(\tau,\nu) = \sum_{n=-\infty}^{\infty} \sum_{m=-\infty}^{\infty}e^{-j2\pi (\nu nT - \tau m \Delta f)}W[n,m],
\end{equation}
where $W[n,m]= W_{\rm tx}[n,m] \, W_{\rm rx}[n,m]$ is the product of the transmit and receive windows. Note that as the support of the window $W[n,m]$ along  time and frequency increases, the filtering function $w$ gets narrower in delay-Doppler and as a result $h_w(\tau,\nu)$ more closely approximates the true channel impulse response $h_c(\tau,\nu)$. The OTFS
input-output relation is summarized in the following key theorem.

\begin{mytheorem} {\bf OTFS delay-Doppler input-output relation.} \label{Main_Thm}
The input-output relation between the periodized demodulated noisy sequence $\hat{y}_p[k,l]$ and the periodized transmitted information symbol sequence $x_p[k,l]$ is given by:
\begin{equation}\label{xkl}
\begin{split}
  &\hat{y}_p[k,l] =\frac{1}{NM} \sum_{k'=0}^{M-1} \sum_{l'=0}^{N-1}h_w \left( k'\Delta \tau,l' \Delta \nu \right) \\
  & \times x_p[k-k',l-l']+v_p[k,l],
  \end{split}
\end{equation}
where $v_p[k,l]=\text{SFFT}^{-1}(V_p[n,m])$ and $V_p[n,m]$ stands for the periodization of the sampled and windowed time-frequency noise term $W_{\rm rx}[n,m]V[n,m]$.
\end{mytheorem}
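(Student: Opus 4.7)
The plan is to trace the signal through the entire OTFS pipeline and collapse each stage using results already established in the paper. First I would invoke the time-frequency development of Sec.~\ref{sec:tfmodulation}: by Lemma~\ref{Twisted_Lemma} the composition of the channel and modulation Heisenberg operators equals the single Heisenberg operator with impulse response $h_{\rm c}\ast_\sigma X$, and under the robust bi-orthogonality condition the sampled cross-ambiguity computation collapses to the scalar multiplicative relation $\hat Y[n,m]=H[n,m]\,X[n,m]+V[n,m]$ with $H[n,m]$ given by (\ref{eq:Y_H_X}). This is the black box I would carry forward.

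Second, I would substitute the OTFS transmitter rule $X[n,m]=W_{\rm tx}[n,m]\,\text{SFFT}(x_p)[n,m]$ and apply the receive window to obtain $\hat Y_W[n,m]=W[n,m]\,H[n,m]\,\text{SFFT}(x_p)[n,m]+W_{\rm rx}[n,m]V[n,m]$, where $W=W_{\rm tx}W_{\rm rx}$. Because $\text{SFFT}(x_p)[n,m]$ is already $(N,M)$-periodic, step 3 of the demodulator pulls that factor outside the periodization sum and yields $\hat Y_p[n,m]=\text{SFFT}(x_p)[n,m]\,\tilde H_p[n,m]+V_p[n,m]$, where $\tilde H_p[n,m]$ is the $(N,M)$-periodization of $W\cdot H$.

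Third, and this is the core technical step, I would identify $\tilde H_p[n,m]$ with the finite SFFT of the samples $\tfrac{1}{NM}h_w(k\Delta\tau,l\Delta\nu)$ on the reciprocal lattice $\Lambda^\perp$. Starting from (\ref{eq:Y_H_X}), multiplying by $W[n,m]$ and substituting the SDFT expansion (\ref{eq:sfft_W}) expresses $W\cdot H$ as the symplectic discrete-time Fourier coefficients of the continuous convolution $\big(h_{\rm c}(\tau,\nu)e^{-j2\pi\nu\tau}\big)\ast w(\tau,\nu)$, which by (\ref{eq:window}) is precisely $h_w$. A Poisson-summation argument then converts the $(N,M)$-periodization on the time-frequency side into sampling on $\Lambda^\perp$ with the correct spacings $\Delta\tau=1/(M\Delta f)$, $\Delta\nu=1/(NT)$ from (\ref{eq:Dual_Lat}), producing the required $\tfrac{1}{NM}h_w$ samples.

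Finally, I would apply Theorem 1 (symplectic convolution property) to $\hat y_p=\text{SFFT}^{-1}\!\big(\text{SFFT}(x_p)\cdot \tilde H_p\big)$: the product on the time-frequency side becomes a 2D circular convolution on the delay-Doppler side, giving the stated formula, while the noise term collapses to $v_p=\text{SFFT}^{-1}(V_p)$ by definition. I expect the third step to be the main obstacle, since it reconciles the continuous and finite symplectic Fourier pictures and requires carefully tracking the complex exponential $e^{-j2\pi\nu\tau}$ inherited from (\ref{eq:Y_H_X}) so that it ends up packaged inside $h_w$ exactly as in (\ref{eq:window}); once that identification is made, the remainder is essentially bookkeeping.
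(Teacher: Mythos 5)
Your proposal is correct and follows essentially the same route as the paper's appendix proof: both start from the multiplicative time-frequency relation \eqref{eq:Y_H_X}, substitute the modulation formula \eqref{eq:Xnm} and the window SDFT \eqref{eq:sfft_W}, and recognize the resulting integral as $h_w$ from \eqref{eq:window}. The only difference is organizational---you route the final step through the symplectic convolution property (Theorem~1) and treat the periodization and noise terms explicitly, whereas the paper obtains the circular convolution by direct expansion of \eqref{eq:xYb}.
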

\begin{proof}
See Appendix.
\end{proof}

A graphic representation of the Delay-Doppler input-output relation is depicted in Fig. \ref{fig:Channelconvolution}. 

\begin{figure}[htbp] 
  \centering
  \includegraphics[width=3.5in,keepaspectratio]{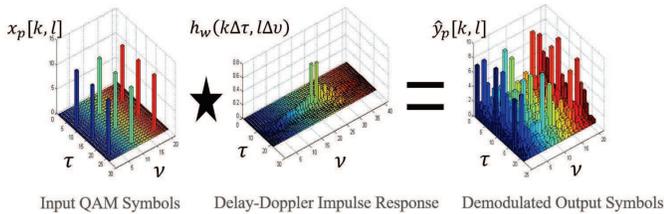}
  \caption{"Clean" OTFS delay-Doppler input-output relation (in the absence of noise): Convolution of (sampled and filtered) channel delay-Doppler impulse response with modulation symbols in the delay/Doppler domain.}
  \label{fig:Channelconvolution}
\end{figure}

\subsection{Equalization} \label{sec:equalizer}

For typical system parameters of broadband transmission, the assumption that the support of the channel impulse response is limited to the neighborhood of a lattice point, and thus 
 the bi-orthogonality condition, is not fulfilled. Consequently, the 2D intersymbol interference (\ref{xkl}) must be eliminated by a suitable equalizer. Possible structures include the 2D versions of standard equalizers, namely (i) linear equalizers, (ii) non-linear equalizers such as decision feedback and maximum-likelihood sequence estimators, possibly approximated as turbo equalizers. While a detailed discussion of equalizer structures for OTFS is beyond the scope of this paper, we note that linear equalizers generally perform poorly, see also Sec. V. More detailed discussions are given in  \cite{raviteja2017low,li2017simple,zemen2017low,murali2018otfs}. The impact of advanced equalizer structures on receiver complexity is limited due to the sparsity and translation invariance of the channel response, and the relative impact on {\em overall} receiver complexity (including decoding) is even less pronounced, especially since many modern systems use maximum-likelihood receivers.  


\section{Interpretation and implementation}

\subsection{Implementation as overlay}

OTFS can be implemented in a variety of ways. One expedient method is as an overlay of an existing OFDM system, since highly optimized hardware already exists for such systems, especially in the context of cellular (3GPP) and wireless LAN (WiFi) systems. Figure \ref{fig:overlay} shows the flow diagram that makes use of this structure. Current OFDM transceivers already implement a form of the Heisenberg/Wigner transform. At the transmitter, it is thus sufficient to perform a 2D OTFS transform (which can be implemented as a 2D SFFT), and let the resulting symbols be the input for the existing OFDM modulator. At the receiver, the output of the (soft) OFDM demodulator also undergoes a 2D OTFS transform, the results of which are used as input to the OTFS equalizer and demodulator. This can be thought of as a generalization of the approach in single-carrier (SC)-FDMA (also referred to as DFT-spread-OFDM) where a one-dimensional FFT is applied. 

\begin{figure}[htbp] 
  \centering
  \includegraphics[width=2.75in,keepaspectratio]{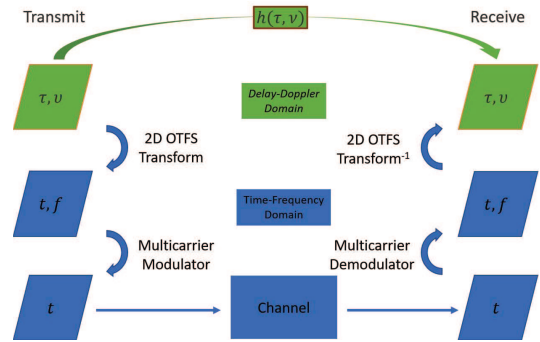}
  \caption{Signal flow in an implementation of OTFS as time-frequency overlay.}
  \label{fig:overlay}
\end{figure}

In terms of implementation complexity, we have to consider two components: the Heisenberg transform is already implemented in today's systems in the form of OFDM/OFDMA,  corresponding to a prototype filter $g(t)$ which is a square pulse. Other filtered OFDM and filter bank variations have been proposed for 5G \cite{banelli2014modulation}, which can also be accommodated in this general framework with different choices of $g(t)$. The second step of OTFS is the 2D finite symplectic Fourier transform (SFFT).  We now compare that complexity to that of SC-FDMA: for a frame of $N$ OFDM symbols consisting of $M$ subcarriers each, SC-FDMA adds $N$ DFTs of $M$ points each (assuming the worst case that all subcarriers are given to a single user). For these parameters, the additional complexity of SC-FDMA is then $MN\log_2(M)$ over the baseline OFDM architecture. For OTFS, the 2D SFFT has complexity $MN\log_2(MN)=MN\log_2(M)+MN\log_2(N)$, so the term $MN\log_2(N)$ is the OTFS additional complexity compared to SC-FDMA. For an LTE subframe with $M=1200$ subcarriers and $N=14$ symbols, the additional complexity is 37\% more compared to the additional complexity of SC-FDMA. However, note that this complexity increase only occurs if OTFS is implemented on top of an existing transceiver. Just like for SC-FDMA, some operations of the preprocessing and the IFFT of the OFDM modulator cancel each other out. In fact, one can show that an optimized OTFS transmitter has a complexity that is essentially half of that of an equivalent OFDM transmitter. 

\subsection{Multiplexing}

There are a variety of ways to multiplex several uplink or downlink transmissions in one OTFS frame. The most natural one is multiplexing in the delay-Doppler domain, such that different sets of OTFS basis functions, or sets of information symbols or resource blocks are given to different users. Given the orthogonality of the basis functions, the users can be separated at the receiver. For the downlink, the UE (user equipment) need only demodulate the portion of the OTFS frame that is assigned to it. This approach is the natural dual to resource block allocation in OFDMA. It is noteworthy, however, that the OTFS signals from all users extend over the whole time-frequency window, thus providing full diversity; by this we mean that for a channel with $Q$ clustered reflectors ($Q$ multipath components separable in either the delay or Doppler dimension) the OTFS modulation can achieve a diversity order equal to $Q$. Furthermore, this full spreading is also advantageous from a Peak to Average Power Ratio (PAPR) point of view (see Sec. \ref{sec:PAPR}). In the uplink direction, transmissions from different users experience different channel responses. Hence, the different subframes in the OTFS domain will experience a different channel. This can potentially introduce inter-user interference at the edges where two user subframes are adjacent, and would require guard gaps to eliminate it. 

An alternative is multiplexing in the time-frequency domain, i.e., different resource blocks or subframes are allocated to different users. These resource blocks can either be contiguous, or interleaved. In the former case, each user's signal is transmitted over a subset of the time-frequency plane, and thus has reduced diversity. 

\subsection{Diversity and channel gain}

From \eqref{xkl} we see that over a given frame, each demodulated symbol $\hat{x}[k,l]$ for a given $k$ and $l$ experiences
the same channel gain on the transmitted symbol $x[k,l]$. This combined with a non-linear equalizer at the receiver, as discussed in Sec. \ref{sec:equalizer}, allows to extract the full channel diversity. 
The almost-constant channel offers several important performance benefits. Firstly, it obviates the need for fast adaptive modulation and coding (AMC). While AMC provides significant benefits in slowly varying channels \cite{goldsmith1998adaptive}, it might either require high feedback overhead, or be completely impossible (when the channel coherence time is less than the feedback time) in systems operating in fast-varying channels. If obtaining occurate CSI (channel state information) knowledge at the TX becomes impossible, then channel variations are detrimental to performance, as the AMC must be chosen to accommodate the worst-case SNR. 
Thus, for high-mobility situations (vehicle-to-vehicle, high-speed train, etc.), channel whitening through spreading, and thus operating with a constant SNR over extended time periods, is not only the simpler solution, but also provides the better performance. 

The importance of a robust and fixed-rate channel is increased for applications that - due to latency constraints - do not allow retransmissions. This is especially critical for running applications over the TCP/IP protocol, which dramatically backs off the rate when packet failures occur, and subsequently takes a long time to converge again to a higher rate.

A high diversity could also be achieved in OFDM, using suitable interleaving and coding. However, that solution is subject to a number of drawbacks: (i) the information is not uniformly distributed over the time-frequency plane; thus diversity is not fully exploited, and the higher the coderate the more pronounced is the effect; (ii) it is not an effective solution for short codewords, since the diversity is upper bounded by the number of transmitted bits; in contrast OTFS always provides full diversity. 

A second important advantage of the almost-constant channel lies in enabling simplified equalizers and decoders, as well as precoders. For example, within the duration of an OTFS symbol, the equalizer coefficients do not have to be adapted, while every symbol in OFDM needs a different (though significantly simpler) equalizer.\footnote{ strictly speaking, the same equalizer coefficients can be applied to symbols that are within time and frequency intervals significantly smaller than the channel coherence time and coherence frequency, respectively. } More important, any signal predistortion can be done equally for all signals. 

The convergence to a constant channel gain can also be interpreted as "channel hardening", an effect well known from massive MIMO systems \cite{marzetta2016fundamentals}. Notably, this effect inherently occurs even in a single-antenna OTFS system. This can be explained by interpreting the antenna locations of the UE at different times during the considered time window as a massive ''virtual array".  

\subsection{PAPR} \label{sec:PAPR}

Low PAPR is an important goal for modulation/multiple access design since it reduces the maximum linear power requirements for the transmit amplifiers. This is particularly important for the uplink of cellular systems, since amplifiers in consumer devices such as handsets need to be low-cost. OTFS (considered here with delay/Doppler multiplexing) reduces uplink PAPR in two ways: (i) if a user is assigned a single Doppler frequency, then the PAPR is the same as for single-carrier transmission, i.e., significantly lower than for OFDM. (ii) in conjunction, due to the spreading operation, the packet transmission extends over a longer period of time than in OFDM which allows to increase the maximum energy per bit under Tx power constraints. This is particularly relevant for short packets. 

It is noteworthy that especially for short packets, OTFS achieves a superior trade-off between PAPR and performance compared to SC-FDMA, even in time-invariant channels.  While SC-FDMA can have low PAPR during the active signal duration, the {\em overall} PAPR is only small if the signal has a duty cycle close to unity, which in turn requires that (due to the small packet size) it utilizes only a single (or very few) subcarriers. However, such an approach, which is also used by LTE, leads to low frequency diversity and thus inferior performance; furthermore, for very short packets, SC-FDMA might still have a duty cycle $<1$ even for such a configuration. OTFS, on the other hand, can obtain full spreading in time and frequency while keeping the PAPR low.  


Standard SC-FDMA multiplexes data over contiguous bands. A more sophisticated variant is referred to as hopped SC-FDMA. This mode of transmission maximizes the link budget as it enjoys low PAPR comparable to single carrier and maximize transmission duration, while at the same time extracting additional diversity gain. However, there is a subtle phenomenon that renders this approach sub-optimal. To maintain low PAPR, the QAM order must be kept low - say QPSK. Under this constraint, the transmission rate can only be adjusted by changing the FEC rate, and thus the performance is governed by the restricted QPSK capacity (or restricted mutual information) instead of by the Gaussian capacity. Ref. \cite{carmon2015comparison} showed that in the presence of time-frequency selectivity the restricted capacity of multicarrier modulations is saturated, becoming strictly sub-optimal. 

\section{Performance Results}\label{sec:results}



In the following we present some results that demonstrate some key benefits of OTFS. The parameters of the simulations are chosen to be comparable to, and consistent with, a use of OTFS as an ``overlay" of a 4G LTE system (further advantages could be achieved by the use of an OTFS ``greenfield" system, e.g., by abolishing the cyclic prefix as discussed above). Thus, the performance advantages presented here can be seen as a lower bound on the performance gains relative to OFDM. 

To be more specific, the simulations use the  PHY layer parameters to comply with the 4G LTE specification (ETSI TS 36.211 and ETSI TS 36.212), unless otherwise specified. For the OTFS system we add the OTFS transform pre- and post-processing blocks at the transmitter and receiver respectively. We simulate the wireless fading channel according to the TDL-C channel model (delay spread of 300 ns, Rural Macro, and low correlation MIMO), one of the standardized channel models in 3GPP. The details of the simulation parameters used are summarized in Table I.



\begin{table}
\begin{center}
\tiny{
\begin{tabular}{|l|l|p{3.75in}}
  \hline
  {\bf Parameter} & {\bf Value} \\ \hline
  Carrier frequency (GHz) & 4.0 \\ \hline
  Duplex mode & FDD \\ \hline
  Subcarrier spacing (kHz) & 15 \\ \hline
  Cyclic prefix duration ($\mu$ s) & 4.7 \\ \hline
  FFT size & 1024 \\ \hline
  Transmission bandwidth (resource blocks) & 50 \\ \hline
  Antenna configuration & 1T 1R (SISO) \\ \hline
  Rank & Fixed rank \\ \hline
  MCS & fixed: 4QAM, 16 QAM, 64 QAM \\ \hline
  Control and pilot overhead & none \\ \hline
  Channel estimation & ideal \\ \hline
  Channel model & TDL-C, DS=300 ns \\ \hline
\end{tabular}
}
\caption{Simulation parameters for performance results}
\end{center}
\end{table}

Figure~\ref{fig:uncoded_BER} shows the BER of an uncoded system operating in a time-and-frequency dispersive channel, with a medium Doppler (velocity corresponding to highway vehicle speed) and with different modulation formats. When comparing the results of OFDM modulation to that of OTFS, when both are using MMSE equalizers, we find that for low modulation order (4QAM and 16 QAM), OTFS outperforms OFDM; in particular for 4QAM OFDM shows a significant error floor due to the intercarrier interference that is not present for OTFS. Furthermore, we see that the slope of the BER-vs-SNR curve is steeper for OTFS than for OFDM even outside the "error floor" region, which can be explained by the higher diversity order.
However, we also note that for higher-order modulation OFDM outperforms OTFS with MMSE (in line with the discussion in Sec. III.C). This can be remedied by the use of a non-linear equalizer in OTFS. We analyze in particular DFE structures, distinguishing between "standard" DFE and DFE with "genie"-aided feedback (i.e., for the interference subtraction the DFE knows the correct symbol sequence).   
We see that for 4QAM, there is at most a 1dB difference of the SNR when comparing genie-aided to standard DFE, indicating that error propagation is not a significant issue. However, error propagation can be more significant at higher modulation orders. 

\begin{figure}[htbp] 
  \centering
   \includegraphics[bb=0 0 560 420,width=2.7in,keepaspectratio]{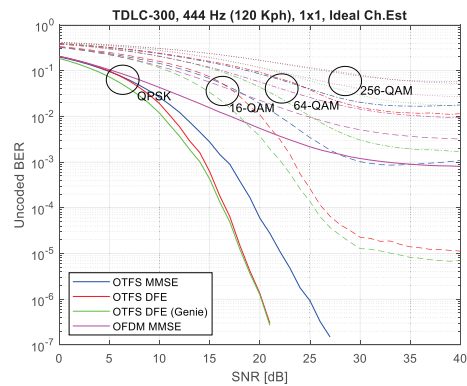}
  \caption{Uncoded BER for 4QAM/16QAM/64QAM/256QAM, TDL-C channel model, 120kmph. Curves show OFDM with MMSE equalizer, and OTFS with (i) MMSE equalizer, (ii) DFE equalizer with error propagation, and (iii) DFE equalizer with genie feedback}.
  \label{fig:uncoded_BER}

\end{figure}


%

Figure \ref{fig:BLER5} shows the coded error rate for a variety of equalizer structures. We again compare OFDM with MMSE equalization, to OTFS with MMSE and with DFE. We firstly see that MMSE equalization is not suitable for use in OTFS, and would lead to equal or worse performance than OFDM. We also see that due to error propagation, the standard DFE does not perform well either. These trends are much more pronounced at low code rate and high modulation order, consistent with our discussion of the uncoded case. However, an iterative DFE (which is seen to closely approximate the performance of a genie-aided DFE) performs very well, and OTFS with such a receiver significantly outperforms OFDM.

\begin{figure}[htbp] 
  \centering
  \includegraphics[width=3.5in,keepaspectratio]{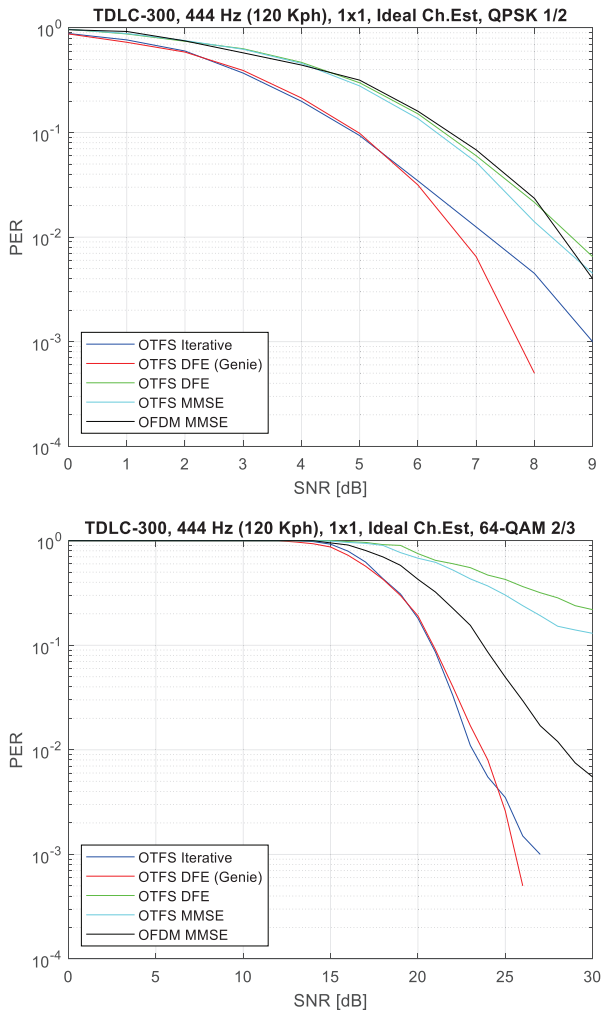}
  \caption{BLER for 4QAM, code rate 1/2 and 64QAM code rate 2/3, 120kmph.}
  \label{fig:BLER5}

\end{figure}

We next simulate a situation with short packets. Figure~\ref{fig:BLER2} shows the BLER performance for a system with mobile speeds of 30 km/h if only 4 resource blocks (48 subcarriers), called PRBs in LTE, are occupied by the user of interest out of a total of 50 resource blocks (600 subcarriers). This corresponds to a short packet length. Notice the increased diversity gain of OTFS compared to OFDM in this case resulting in gains of 4 dB or more as SNR increases. This diversity gain is because the OTFS transform spreads each QAM symbol over all time and frequency dimensions of the channel and then extracts the resulting full diversity, while OFDM limits the transmission to a narrow subchannel of 48 subcarriers.

\begin{figure}[htbp] 
  \centering
  \includegraphics[bb=0 0 560 420,width=2.7in,keepaspectratio]{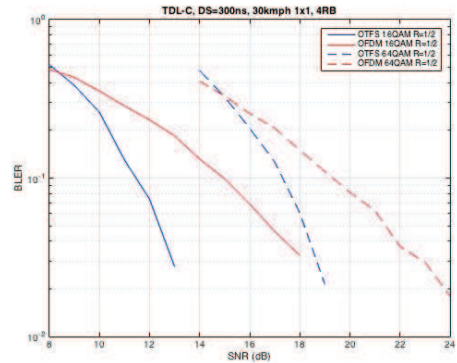}
  \caption{BLER  for short packet length (4 PRBs out of 50), 16QAM/64QAM, Code rate R=1/2, 30kmph.}
  \label{fig:BLER2}
\end{figure}

A similar situation arises when a number of users are multiplexed, leading to a smaller percentage of resources for each user. Figure \ref{fig:number_of_users} illustrates this effect. It shows the coded PER when the packet size per user is varied (corresponding to having to supply a different number of users). For OFDM, performance becomes worse as the packet size decreases, since it implies that the diversity of the system decreases. OTFS shows essentially unchanged behavior for packet sizes ranging from 2 to 50 PRB (physical resource blocks). 


\begin{figure}[htbp] 
  \centering
  \includegraphics[width=2.7in,keepaspectratio]{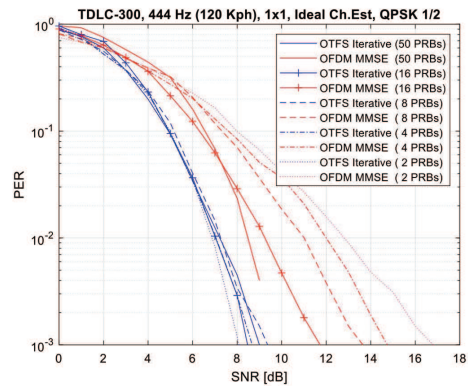}
  \caption{BLER  for QPSK with rate 1/2 code and different numbers of users.}
  \label{fig:number_of_users}
\end{figure}

As pointed out previously, one of the key advantages of OTFS is the stability of the effective SNR, which not only provides better operating points in systems without sufficiently fast CQI feedback, but also offers a more stable channel to the MAC layer. Figure~\ref{fig:SNR_distribution} shows an example for this effect. In an ETU channel with 120 km/h UE mobility, the SNR seen by an OFDM system shows significant variations (more than 4 dB standard deviation), while the SNR of an OTFS system with a 1 ms window has only 1.1 dB standard deviation, and OTFS with a 10 ms time window is almost constant (standard deviation 0.2 dB). When considering the cdf of the SNR, if an outage probability of $0.01$ is required, then an OFDM system requires a fading margin that is 7 dB larger than OTFS with a 1 ms window, and 9 dB compared to OTFS with a 10 ms window. 

\begin{figure}[htbp] 
  \centering
  \includegraphics[width=3.2in,keepaspectratio]{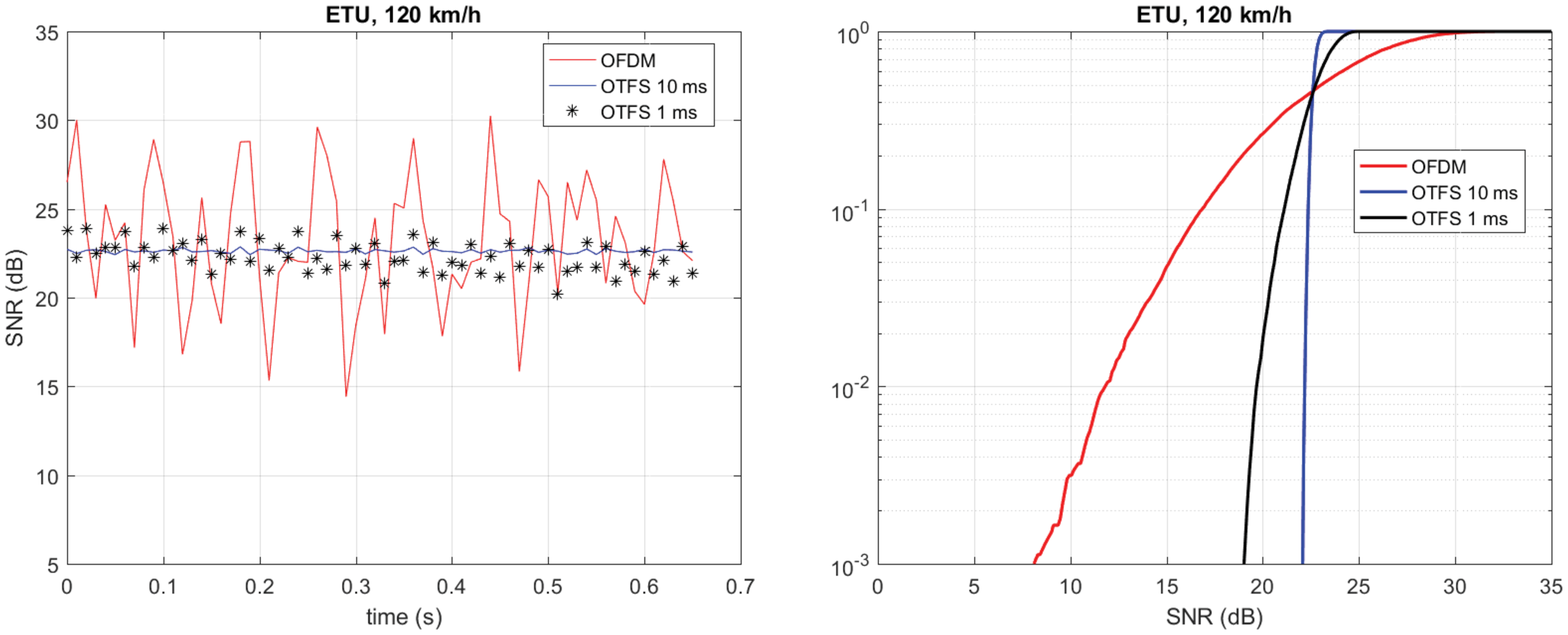}
  \caption{Evolution of receive SNR over time for (i) OFDM system, (ii) OTFS with 1ms window, and (iii) OTFS with 10 ms window. Channel is ETU, velocity of the UE is 120 km/h. }
  \label{fig:SNR_distribution}
\end{figure}

\section{Conclusion}\label{sec:conclusion}


In this paper we developed OTFS, a novel two-dimensional modulation scheme for wireless communications with significant advantages in performance over existing modulation schemes such as TDMA and OFDM. OTFS operates in the delay-Doppler coordinate system and we showed that with this modulation scheme coupled with equalization, all modulated symbols experience the same channel gain by extracting the full channel diversity. As a result of its operating principle, OTFS has the following important advantages:
\begin{itemize}
\item No need for channel adaptation, since OTFS provides a stable data rate. This is especially important in systems with high mobility, where feedback of CSI to the transmitter becomes impossible, or afflicted with large overhead. 
\item Better packet error rates (for the same SNR) or reduced SNR requirements (for the same PER) in the presence of high mobility (V2V, high-speed rail), or high phase noise (mm-wave systems). 
\item Improved PAPR, in particular for short packet transmission.
\item Improved MIMO capacity when using finite-complexity receivers. 
\end{itemize}

As a new modulation format, there are many aspects of OTFS that merit closer investigation, to optimize performance, reduce complexity, and enhance coexistence with existing systems. These aspects will be investigated in future publications. 


\appendix

Our goal is prove the formula \eqref{xkl} of Theorem \ref{Main_Thm}. Recall that $\hat{Y}[n,m]=Y[n,m]+V[n,m]$ where $Y[n,m]$ is the noise free term and $V[n,m]$ is additive noise. Substituting $Y[n,m]$ in the demodulation equation \eqref{eq:xYb} and using the time-frequency channel equation \eqref{eq:Y_H_X} and the modulation formula \eqref{eq:Xnm}, we can write: 
\begin{equation}
  \begin{split}
    &y[k,l] = \frac{1}{NM}\sum_{k^\prime=0}^{M-1} \sum_{l^\prime=0}^{N-1} x[k^\prime,l^\prime] \iint h_{\rm c}(\tau,\nu) e^{-j2\pi \nu \tau} \\
    &\times  \sum_{n,m=-\infty}^\infty W[n,m] e^{-j2\pi \left ( nT \left( \frac{l-l^\prime}{NT}-\nu \right ) - m \Delta f \left( \frac{k-k^\prime}{M\Delta f}-\tau \right) \right )}\mathrm{d}\tau \mathrm{d}\nu.
  \end{split}
\end{equation}
Since the factor in brackets is the discrete symplectic Fourier transform of $W[n,m]$ we have:
\begin{equation}
  \begin{split}
    &y[k,l] =\frac{1}{NM}\sum_{k^\prime=0}^{M-1} \sum_{l^\prime=0}^{N-1} x[k^\prime,l^\prime] \iint h_{\rm c}(\tau,\nu) e^{-j2\pi \nu \tau} \\
    &\times w\left( \frac{k-k^\prime}{M\Delta f}-\tau, \frac{l-l^\prime}{NT}-\nu \right) \mathrm{d}\tau \mathrm{d}\nu.
  \end{split}
\end{equation}
Further recognizing the double integral as a convolution of the channel impulse response (multiplied by an exponential) with the transformed window we obtain:
\begin{equation}
  y[k,l] =\frac{1}{NM}\sum_{k^\prime=0}^{M-1} \sum_{l^\prime=0}^{N-1} x[k^\prime,l^\prime] h_w \left( \frac{k-k^\prime}{M\Delta f}-\tau, \frac{l-l^\prime}{NT}-\nu \right),
\end{equation}
which is the desired result.

\bibliographystyle{IEEEtran}
\bibliography{IEEEabrv,otfslit}

\end{document}